\documentclass{article}
\usepackage{graphicx} 
\usepackage[a4paper, margin=2.5cm]{geometry}
\usepackage{amsthm}
\usepackage{authblk}

\theoremstyle{plain} 
\newtheorem{theorem}{Theorem}
\newtheorem{lemma}{Lemma}

\theoremstyle{definition} 

\theoremstyle{remark} 

\graphicspath{{figures/}}

\usepackage{xcolor}

\title{On the complexity of covering points by guillotine cuts}

\author[1]{Delia Garijo}
\affil[1]{Universidad de Sevilla, Sevilla, Spain} 
\author[1]{Alberto Márquez}
\author[2]{Rodrigo I. Silveira}
\affil[2]{Universitat Politècnica de Catalunya, Barcelona, Spain}

\date{}

\begin{document}

\maketitle

\begin{abstract}
    We show that the problem of covering a set of points in the plane with a minimum number of guillotine cuts is NP-complete.
    To that end, first we present a new NP-completeness proof for the problem of covering points with disjoint line segments.
    Then, we adapt the proof to show that the problem remains NP-complete when the segments are guillotine cuts.
\end{abstract}

\section{Introduction}
Covering problems are among the most fundamental topics in combinatorial optimization and algorithms. 
These problems include classic combinatorial questions such as the \textit{set cover problem}, which asks for a minimum collection of sets to cover all elements in a universe, and the \textit{vertex cover problem} in graphs, where the goal is to compute a minimum set of vertices, such that every edge is incident to at least one vertex in the set.

These foundational problems have plenty of  natural variants that take place in a geometric setting. 
In geometric covering problems, the items to be covered are usually points or regions in some geometric space, and the covering elements are specific geometric shapes like lines, disks, or squares. 
The inherent constraints of the geometric arrangements—unlike their purely combinatorial predecessors—add structure to the problems, but rarely simplify their computational complexity.
This means that, in general,  geometric covering problems remain NP-hard.

Geometric problems that have been shown to be NP-hard range from classic formulations concerned with covering a set of points in the plane with the smallest possible number of lines~\cite{MegiddoT82}, unit squares~\cite{FowlerPT81} or unit disks~\cite{HochbaumM85}, to more specialized settings as covering bichromatic point sets with monochromatic disks~\cite{CabelloDP13} or covering the interior of a polygon with point guards--- which leads to the rich variety of problems known as \emph{art gallery problems} (see~\cite{LeeLin1986} for a classic paper on this topic).

In this note, we focus on two variants of the problem of covering points with lines.
In the classic \emph{point line cover} problem, one is given $n$ points in the plane, with coordinates $(x_1, y_1), \dots, (x_n, y_n)$, and the goal is to find a set of lines $\ell_1, \dots, \ell_r$ of minimum cardinality such that for each $1\leq i \leq n$, point $(x_i, y_i)$ lies on at least one line $\ell_j$, for some $1 \leq j \leq r$.
As mentioned above, the point line cover problem has been known to  be NP-hard for decades~\cite{MegiddoT82}.

Our interest is on variants of the point line cover problem where the objects used to cover points are \emph{disjoint}.
Perhaps the most natural variant in this direction is covering with \textit{disjoint line segments}.
Surprisingly, the complexity of this problem remained open until recently, when Grelier~\cite{grelier:LIPIcs.SoCG.2022.45} proved that the problem is NP-hard.

The main goal of this work is to prove a similar result for the problem of covering points with \textit{guillotine cuts}.
Informally, a guillotine cut is a straight line segment (possibly degenerating into a halfline or line) that extends from one segment  to another segment.
Guillotine cuts (also known as guillotine partitions) are subdivisions formed by recursively applying guillotine cuts.
They appear naturally in cutting stock problems, and the resulting segments are disjoint, except at endpoints.
See Figure~\ref{fig:example-problems} for an illustration of both problems.

While the problem of covering points by disjoint segments is known NP-hard, the reduction used by Grelier~\cite{grelier:LIPIcs.SoCG.2022.45} is from the problem Maximum Independent Set in Intersection Graphs
of Segments.
Since it is not evident how to adapt such a reduction for the particular case of segments that are guillotine cuts, in this work we present a new proof of NP-hardness for  covering points by disjoint segments, reducing from planar monotone 3-SAT, which we can later adapt to work with guillotine cuts.

While verifying that these problems are NP-hard is not very surprising, confirming hardness is an important prerequisite for any further algorithmic study of the problems.

In the following, we will consider the slightly more general \textit{decision versions} of the problems above.
Namely, we will be interested in determining, given $n$ points and a value $k$, if the given points can be covered by $k$ disjoint line segments, or $k$ guillotine cuts.


\begin{figure}[tb]
    \centering
    \includegraphics{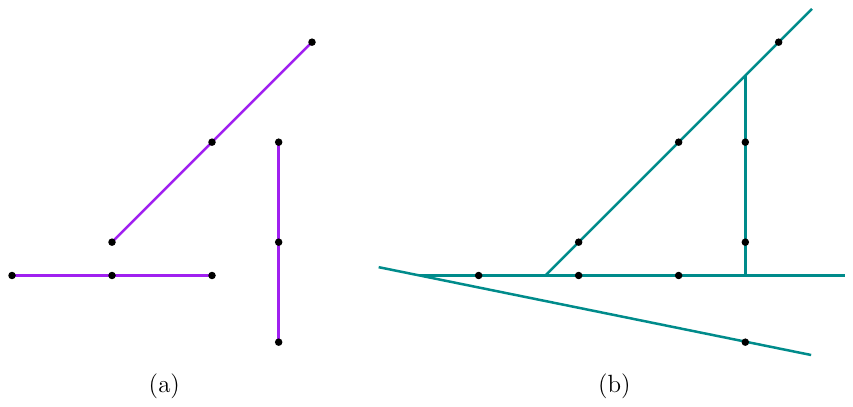}
    \caption{Optimal coverings for $n=9$ points with (a) three disjoint segments, (b) four guillotine cuts.}
    \label{fig:example-problems}
\end{figure}

\section{Covering points by disjoint segments}

In this section present a new NP-completeness proof for the problem of covering points with disjoint line segments, using a construction that we can later adapt to guillotine cuts.

We prove that, given a set of $n$ points in the plane and a value $k$, determining if all points can be covered by $k$ disjoint line segments is NP-complete. 
Our proof is an adaptation of the original proof by Megiddo and Tamir showing that the point line cover problem is NP-complete~\cite{MegiddoT82}, modified to use disjoint line segments instead of arbitrary lines.

The proof by Megiddo and Tamir consists of a reduction from the 3-satisfiability problem (3-SAT) to point cover.
The idea is that one is given an instance of 3-SAT, and based on it one builds a set of points, such that a solution to the decision version of the point cover problem on those points can be translated into a solution to the 3-SAT problem.

More precisely, the input to 3-SAT consists of a boolean  formula in conjunctive normal form, i.e., a conjunction of clauses  $C_1 \land \dots \land C_m$, where $C_j$ is a clause of the form $(l_i \lor l_j \lor l_k)$, with each of $l_i$ being a literal (variable or negated variable) from the variable set $x_1,  \dots, x_n$.
Thus the input has $m$ clauses and $n$ variables.
We assume that each clause contains at least two literals.

Megiddo and Tamir show how to construct from the formula a set of $m+nm^2$ points, such that the input formula is satisfiable if and only if the set of points can be covered by $nm$ lines.
Unfortunately, it is unclear how to shorten or adapt the set of lines needed to optimally cover the points in their reduction to become an equivalent set of  \textit{disjoint} line segments.
Therefore, we cannot apply their reduction directly.

In this section, we present an adaptation of the proof to use disjoint line segments.
Essentially, we redefine the construction of the original proof so that all lines in an optimal solution to line cover can be shortened to segments, so that no two of them intersect.

To that end, we introduce two main changes:
(1) we change the problem of the reduction from 3-SAT to \emph{planar monotone 3-SAT} (RPM3-SAT), proved NP-complete by de Berg and Khosravi \cite{de2010optimal};
(2) we invert the overall structure of the construction, placing the variables---as opposed to the clauses---along a central horizontal line.

The structure of the construction is designed  to work also for guillotine cuts, as we show in the next section.

In RPM3-SAT, the given formula is \emph{monotone}, i.e., each clause has only positive or only negative literals. 
Moreover, the formula is given together with a drawing of the graph $G$ of the formula.
The drawing of $G$ satisfies the following convenient properties:
(i) Each clause and each variable is drawn as an axis-aligned rectangles,
(ii) all variable-rectangles are lying on a horizontal line,
(iii) positive clauses are drawn above the horizontal line, and negative clauses below,
(iv) the edges connecting the variables to the clauses are vertical segments, and
(v) the drawing has no crossings.
See Figure \ref{fig:monotone-3SAT-example} for an example.

\begin{figure}
    \centering
    \includegraphics{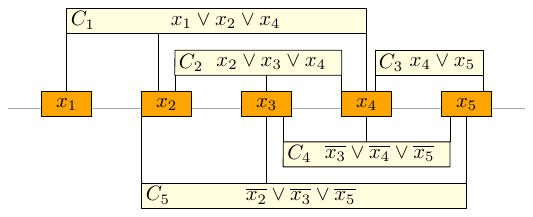}
    \caption{Example of the drawing of a  monotone 3-SAT formula with five variables and five clauses.}
    \label{fig:monotone-3SAT-example}
\end{figure}

Without loss of generality, in the following we assume that in the input formula all clauses are different, and that no clause is contained in another clause (e.g., we cannot have a clause $(x_1 \lor x_2)$ and another clause $(x_1 \lor x_2 \lor x_3)$). 

\subsection{Adapting the drawing}
For our purposes, we transform the standard drawing of $G$, described above, into a different drawing, as follows.
Refer to Figure \ref{fig:instance-example2}.
\begin{figure}
    \centering
    \includegraphics{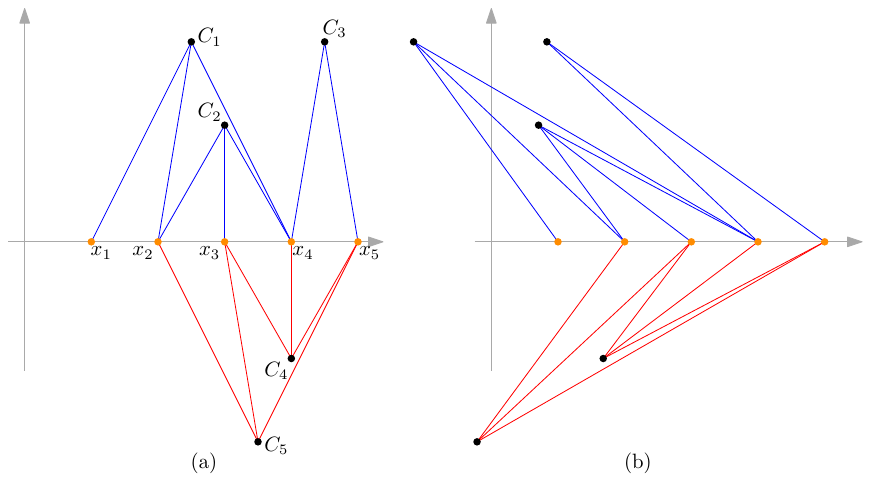}
    \caption{Transformed drawing of example in Figure~\ref{fig:monotone-3SAT-example} after (a) first phase and (b) second phase.}
    \label{fig:instance-example2}
\end{figure}
We present the transformation in two phases.
\paragraph{Phase 1}
First, each variable becomes represented by one \textit{variable point} on the positive $x$-axis.
Second, each clause becomes represented by a \textit{clause point} and two or three line segments connecting that point to the point of each variable.

Next we make this precise; refer to Figure \ref{fig:instance-example2}(a).
To create variable points we follow the same variable order as in the input drawing, assigning to the $i$th variable $x_i$ (from left to right), point $(i,0)$. 
These points will be called \emph{variable points} (orange in the figure).

For the set of positive clauses, assumed to be drawn above the $x$-axis, we proceed as follows.
Let $h$ be the nesting level of a clause, where $h=0$ means that, in the drawing, the clause is not nested into any other clause; $h=1$ means the clause is contained in exactly one clause, and so on. 
For a clause of level $h$ involving variables with indices $i<j<k$, we define the \textit{clause point} as follows.
The $x$-coordinate is simply the midpoint $(i+k)/2$. 
The $y$-coordinate, which is the same  for all clauses of level $h$, is given by  $\max_{C_{abc}} (c-a)- \varepsilon h$, where the maximum is taken over all clauses of level $h$ involving any variables $a,b,c$ ($a<b<c$), and $\varepsilon$ is a small enough value to avoid collinearities, e.g., $\varepsilon = \frac{1}{m^2}$.

Each clause point is connected with a line segment to the two or three variable points that it includes.
For negative clauses we apply a symmetric construction.

Note that, by construction, the new drawing preserves the clause nesting of the original drawing, and all resulting segments are interior disjoint.

\paragraph{Phase 2} Next we transform the previous drawing once again, to ensure that all segments incident to variables coming from above (positive clauses) have  negative slopes, and all segments coming from below (negative clauses) have  positive slopes.
Refer to Figure \ref{fig:instance-example2}(b).

To this end, for the positive clauses (above the $x$-axis) we apply a shear transformation that maps a vector of slope $1$ to one of negative slope, say, slope $-4$. (e.g., in the figure, the shear transformation for the top half is $(x,y) \rightarrow (x-\frac{5}{4} y, y)$). 
Since the smallest positive slope resulting from the first phase is 1,\footnote{Technically, the smallest possible positive slope that could be produced is  $1 - \varepsilon m$ (since $h \leq m$), but since we can pick $\varepsilon$ arbitrarily small, for our purposes we can omit this detail.}  the shear transformation guarantees that all resulting slopes are negative.
For the negative clauses we apply an equivalent shear transformation, to guarantee that all resulting slopes are positive.

Note that the shear transformations preserve segment disjointness, and thus also preserve the original nesting of the clauses.

\subsection{Variable gadgets and final point set}
\label{sec:variables}
\begin{figure}
    \centering
    \includegraphics{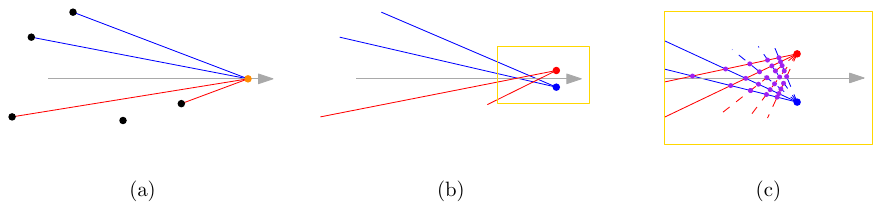}
    \caption{Variable gadgets. (a) Original variable point (orange). (b) Variable point replaced by two points (red and blue); all pairs of incident lines of different color cross.
    (c) Additional segments are added (dashed) to have exactly $m$ segment incident to each red and each blue point; in the final instance, one point is added at each intersection (purple points).}
    \label{fig:instance-example2-variables}
\end{figure}

Next we use a gadget similar to the one used by Megiddo and Tamir~\cite{MegiddoT82} to force a binary behavior on variables.  

Each variable point is replaced by two points.
One point is connected to the positive clauses that include the variable, and is placed slightly below the $x$-axis.
The other point is connected to the negative clauses where the variable appears, and is  located slightly above the $x$-axis.
The two variable points are placed so that the pairs of lines to clauses of different sign cross, see Figures~\ref{fig:instance-example2-variables}(a)--(b).
Following the figure, we will refer to these new variable points and their incident segments as \emph{blue} and \emph{red}.

In principle, each red or blue point is connected by a line segment only to those clause points that contain the variable (in the corresponding positive or negative form).
Next, we add additional segments to complete the number of incident segments at each red and at each blue point to $m$.
These additional segments are relatively short; they  only need to be long enough to intersect all segments incident to the corresponding point of different color, see Figure~\ref{fig:instance-example2-variables}(c).
We add all additional line segments also along lines coming out from the corresponding red or blue point.
In the case of the blue points, we use negative slopes slightly smaller than the slope of the steepest blue edge.
In the case of red points, we do the analog, but with  positive slopes. 

After this, each red and each blue point has   exactly $m$ incident segments, and the two groups of segments cross at exactly $m^2$ points.

We are now ready to describe the final instance, which is a set of points $S$, composed as follows:

\begin{itemize}
    \item All clause points; $m$ in total. 

    \item At each variable gadget we add one point for each intersection between a red and a blue segment (i.e., we add $m^2$ points for each variable---$nm^2$ in total).
\end{itemize}

In total, $S$ has $m+nm^2$ points.
Note that variable points are \emph{not} part of the instance.

The construction relies on a few properties that are easy to satisfy by applying suitable perturbations:
(i) the line through each red or blue segment contains the $m$ points from its variable gadget, and possibly also the corresponding clause point (if and only if the corresponding variable appears in the clause), but no other point from $S$;
(ii) the only lines containing three or more points from $S$ are the lines through a red segment or through a blue segment.
       
\begin{lemma}
Point set $S$ can be covered with $nm$ disjoint line segments if and only if the associated formula can be satisfied.
\end{lemma}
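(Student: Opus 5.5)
We prove the two directions separately. Throughout we use the fact that each variable gadget contains exactly $m^2$ grid points, the intersections of $m$ red and $m$ blue lines, and that no line meets a gadget in more than $m$ of its points.

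\textbf{Satisfiable $\Rightarrow$ cover.} Fix a satisfying assignment. For each variable $x_i$ set to true, we take the $m$ \emph{blue} lines of its gadget. These are the lines through the point connected to the positive clauses. If $x_i$ is false, we take the $m$ red lines instead. Either family of $m$ parallel-like lines through a common (non-instance) variable point covers all $m^2$ gadget points, since every grid point lies on exactly one line of each color.

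Each clause contains a true literal. For a positive clause, a true variable contributes a blue line passing through the clause point; for a negative clause, a false variable contributes a red line through it. For each clause we pick exactly one such line and extend its segment to reach the clause point. All other segments are shortened to span only the gadget's points.

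It remains to check disjointness. Within one gadget, the $m$ segments share only the variable point, and we can avoid it by clipping each segment so that it starts just past that point on the side of the grid. This is legitimate because the grid points all lie on one side of the variable point. Segments from different gadgets that stay within their gadgets are far apart. Segments extended to clause points are subsets of the Phase~2 drawing edges, which are interior-disjoint by construction since the shears preserve this. They also avoid the short segments of other gadgets, because the drawing edges leave the $x$-axis region only near their own variable. Two extended segments reaching the same clause point cannot occur, since we chose only one per clause.

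\textbf{Cover $\Rightarrow$ satisfiable.} Suppose $nm$ disjoint segments cover $S$. First, each gadget needs at least $m$ segments, because a segment contains at most $m$ of its points by property (i)/(ii). Moreover, a segment covering points of two different gadgets would contain at most two points in total by property (ii), which is never better. Counting then forces exactly $m$ segments per gadget, each covering exactly $m$ points, i.e., each lying on a full red or blue line. Mixing colors is impossible: a red line and a blue line cross at a grid point, and the segments must be disjoint. Even setting disjointness aside, with $m$ lines mixing $r$ red and $m-r$ blue lines we cover at most $rm+(m-r)m-r(m-r)<m^2$ points when $0<r<m$. So each gadget is covered by all red or all blue lines. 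We set $x_i$ true if blue and false if red.

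The clause points must also be covered. This uses no extra segments, so each clause point lies on one of the chosen segments. By property (ii), that segment's line contains $m+1$ points, so it is a red or blue line through a variable of the clause. Hence some literal of the clause is true under the assignment: a blue line (true) for positive clauses, a red line (false) for negative clauses.

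\textbf{Main obstacle.} The delicate step is the disjointness claim in the forward direction. We must argue that an extended segment running from a gadget to a clause point does not cross short segments of \emph{other} gadgets, nor segments of the same gadget of the other color. For the latter, only one color is used per gadget. For the former, we would rely on the Phase~2 slopes: blue segments have negative slope and red positive. We then need the short added segments and the grid extent to be small relative to the spacing between variable points. This should hold after the perturbations, since the gadget extent can be made arbitrarily small.
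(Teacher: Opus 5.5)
Your proof is correct and follows the same Megiddo--Tamir-style argument as the paper. Each gadget must be covered by its $m$ all-red or all-blue segments, which gives the truth assignment; clause points can only be covered by extending a segment of a true literal; and disjointness in the forward direction comes from the planar structure of the transformed drawing. Your version is more explicit than the paper's: you give the global count ruling out cross-gadget segments, the $m^2-r(m-r)$ bound, and the clipping of segments just past the common red/blue point. One small caveat is that your counting step needs $m\ge 3$, but this is harmless since every monotone formula with at most two clauses, each of at least two literals, is satisfiable.
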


\begin{proof}
    The proof is analogous to the one by Megiddo and Tamir~\cite{MegiddoT82}, with the main difference that instead of lines,  we can use disjoint line segments to cover the points. For completeness, we include the argument next.
    
    Consider a variable gadget, with its red and blue points, and its $2m$ red/blue line segments.
    The key property of the variable gadget is that the $m^2$ points of a variable gadget can be covered with $m$ line segments if and only if the $m$ segments  all belong to the $m$ lines from the red point, or they all belong to the $m$ lines from the blue point. 
    Moreover, the $m$ line segments can be short enough as to be disjoint.
    Thus, with $nm$ disjoint segments picked in this way, all variable gadget points can be covered, and the choice of the segments used gives a True/False assignment for each variable. 
    
    It remains to analyze the $m$ clause points.
    If the whole set $S$ can be covered with $nm$ disjoint line segments, then the segments are precisely those optimally covering the variable gadget points. 
    The only possibility is to extend some of these segments to cover also clause points. 
    Since, by construction, clause points are collinear only with red/blue segments where the corresponding literal appears, if no extra segment is needed, then every clause point has at least one segment extending until it, which is equivalent to the literal being True, and the clause being satisfied.

    Conversely, if the formula can be satisfied, one can cover all points with $nm$ disjoint line segments by just picking the optimal line segments for each variable. For each clause, we pick one of the literals that is True, and extend the corresponding line segment from the variable gadget until it contains the clause point. 
    In this way, no extra segment is needed to cover all clause points.
    Note that, since the resulting segments follow the structure of the transformed drawing of the formula, they are disjoint.
\end{proof}

It remains to argue that the set of points $S$ can be constructed in polynomial time.

\begin{lemma}
Point set $S$ can be constructed in polynomial time.
\end{lemma}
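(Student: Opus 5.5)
We follow the construction step by step and check two things at each stage: the number of objects created is polynomial in $n$ and $m$, and every coordinate is a rational number whose bit-length is polynomial in $n$ and $m$. Then all arithmetic, and all the geometric tests we need (intersections, orientation and collinearity tests), run in polynomial time.

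\emph{Drawing and Phase 1.} We start from the RPM3-SAT drawing, which comes with the input and has polynomial size. From it we read off the left-to-right order of the variables and the nesting level $h\le m$ of every clause. Nesting levels are computed by testing, for each pair of clause rectangles, whether one contains the other. That is $O(m^2)$ tests. Variable points get integer coordinates $(i,0)$. A clause point gets $x$-coordinate $(i+k)/2$ and $y$-coordinate $\max(c-a)-\varepsilon h$ with $\varepsilon=1/m^2$. These are rationals with $O(\log(nm))$ bits, and there are $O(m)$ segments.

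\emph{Phase 2.} The shear map is a fixed rational linear map, for example $(x,y)\mapsto(x-\tfrac54 y,y)$. Applying it keeps all coordinates rational with polynomially many bits.

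\emph{Variable gadgets.} For each variable we do the following:
\begin{enumerate}
\item Place the red and blue points at rational positions $(i\pm\delta,\mp\delta')$ for suitably small rational offsets.
\item For each point, take its at most $m$ clause segments and add the extra segments along lines of rational slope chosen just beyond the steepest existing slope, for instance $s_{\max}+t\eta$ for $t=1,\dots$ and a small rational $\eta$. This gives exactly $m$ lines per point.
\item Compute the $m^2$ red--blue intersection points. Each is the solution of a $2\times2$ linear system with rational coefficients, so by Cramer's rule its bit-length is polynomial.
\end{enumerate}
In total the gadgets produce $nm^2$ points, each in polynomial time.

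\emph{Main obstacle: the perturbation properties (i) and (ii).} The construction needs that the only lines containing three or more points of $S$ are the red/blue lines, and that the clause points lie only on their intended lines. We would handle this in three steps.
\begin{enumerate}
\item Note that only polynomially many triples of points exist, $O((m+nm^2)^3)$ of them, and each collinearity test is a determinant whose sign can be computed exactly.
\item Choose the free parameters from a polynomially sized grid of rationals with denominators bounded polynomially: the offsets $\delta,\delta'$, the extra slopes $\eta$, and small vertical shifts of the clause points. Each unwanted collinearity, as well as the requirement that a red and a blue line cross within the gadget, is a nontrivial polynomial condition of bounded degree in these parameters. So only polynomially many grid values are bad, and a valid choice exists among polynomially many candidates.
\item Find such a choice by trying candidates and testing all triples, or fix it explicitly as in Megiddo and Tamir~\cite{MegiddoT82}. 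Either way the procedure takes polynomial time.
\end{enumerate}
We must also check that these shifts are small enough to keep the disjointness and nesting of the transformed drawing. Since the minimum separation between features of the drawing is at least inverse-polynomial, choosing perturbations of size $1/\mathrm{poly}(n,m)$ suffices.
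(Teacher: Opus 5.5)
Your proposal takes the same route as the paper: rational, polynomially bounded coordinates for Phases 1 and 2, gadget points as intersections of lines with polynomial-size equations, and perturbations that avoid polynomially many forbidden collinearities as in Megiddo and Tamir, which is exactly what the paper invokes (yours just gives more detail). One small repair to the grid variant: choose the parameters one at a time, so each unwanted collinearity excludes only boundedly many values of the current parameter, because a joint product grid over polynomially many parameters is exponentially large and does not have only polynomially many bad points.
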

\begin{proof}
    The drawings obtained in phases 1 and 2 can be built in polynomial time from the graph of the input formula, since all coordinates involved are rationals, with values that are polynomials in $n$ and $m$. 
    The coordinates of the points in the variable gadgets are intersections of lines, whose  equations---ignoring the perturbations needed---are also given by polynomials. 
    The perturbations needed to avoid unwanted collinearities are analog to the ones used by Megiddo and Tamir~\cite{MegiddoT82}.
    Their same argument, based on expressing the lines as lines that need to go through a few points while avoiding a polynomial-size collection of \emph{forbidden} points, applies directly to our construction, implying that all needed coordinates are polynomially bounded on $n$ and $m$.
\end{proof}

\begin{theorem}
    Given a set $S$ of points in the plane and a value $k$, the problem of determining if all points in $S$ can be covered by $k$ disjoint line segments is NP-complete.
\end{theorem}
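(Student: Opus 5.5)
The proof has two halves: membership in NP and NP-hardness. Hardness follows from the two lemmas above via a polynomial-time reduction from RPM3-SAT, which de Berg and Khosravi~\cite{de2010optimal} proved NP-complete.

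\textbf{Membership in NP.} The certificate will be a list of $k$ segments, each given by its two endpoints. Verification checks two things. First, every point of $S$ lies on some segment. Second, no two segments intersect. Each check uses constant-degree predicates (orientation tests), so verification takes $O(|S|k + k^2)$ arithmetic operations on rationals.

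The one point needing care is that the certificate must have polynomial size. I would argue that any feasible cover can be normalized without increasing the number of segments. Shrink each segment to the convex hull of the input points it covers, i.e.\ make its endpoints the two extreme input points on it. A segment covering a single point becomes a degenerate segment at that point. Shrinking preserves disjointness, because subsets of disjoint sets are disjoint, and it preserves coverage. Hence, if a cover exists, one exists whose endpoints all belong to $S$, and this certificate has bit size polynomial in the input.

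\textbf{NP-hardness.} Given an RPM3-SAT instance with $n$ variables and $m$ clauses, together with its planar monotone drawing, I would carry out the following steps. First, apply the preprocessing assumed in the section: clauses are distinct and no clause contains another. This can be done in polynomial time, since a duplicate or subsuming clause can be removed without changing satisfiability. Second, build the point set $S$ using Phase~1, Phase~2 and the variable gadgets; by the second lemma this takes polynomial time. Third, set $k=nm$. By the first lemma, $S$ can be covered by $nm$ disjoint segments if and only if the formula is satisfiable, so this is a valid many-one reduction.

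\textbf{Expected obstacle.} The main obstacle is the removal of subsuming clauses. I must check that deleting a clause keeps a valid planar monotone drawing, and this holds because deleting a rectangle and its edges from the drawing preserves all of properties (i)–(v). It is also implicitly relevant to check that the perturbations guaranteeing properties (i)–(ii) of $S$ keep polynomial bit sizes, but that is exactly what the second lemma asserts, so I will invoke it directly. Combining the two parts proves the theorem.
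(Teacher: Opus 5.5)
Your proposal is correct and follows the paper's proof: membership in NP by checking coverage and pairwise disjointness of the given segments, and hardness by combining the two preceding lemmas with $k=nm$. Your shrinking argument (endpoints in $S$, so certificates have polynomial size) and your justification of the clause preprocessing are sound refinements that the paper leaves implicit; just note that the clause to delete is the larger, subsumed one (the superset of literals), since the smaller clause implies it.
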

\begin{proof}
    First observe that the problem is in NP, since given a set of line segments, it is straightforward to verify in polynomial time that they cover all points, and that they are disjoint.
    
    The previous results show that, given an instance of planar monotone 3-SAT with $n$ variables and $m$ clauses, in polynomial time we can construct a set of points $S$ such that $S$ can be covered by $k=nm$ disjoint line segments if and only if the given 3-SAT formula can be satisfied.
    Since, the set of points can be constructed in polynomial time, the theorem follows.
\end{proof}

\section{Covering points by guillotine cuts}

In this section we observe that the previous construction can be covered by a sequence of guillotine cuts, therefore proving that covering a set of points by guillotine cuts is also an NP-complete problem.

Consider a sequence of interior-disjoint  line segments $s_1, \dots, s_k$.
With some abuse of terminology, we also consider halflines or lines as (degenerate) segments.
Segment $s_i$ is a \emph{guillotine cut} if it is maximal interior-disjoint with respect to segments $s_1, \dots, s_{i-1}$, meaning that any other longer segment in the plane fully containing $s_i$ will intersect a previous segment. 
Note that here we only forbid \textit{proper intersections}, that is, intersections at points that are interior for both segments.
This implies that guillotine cuts are segments that start and end at previous segments, unless they are unbounded in one of the endpoints.

As the example in Figure~\ref{fig:example-problems} shows, covering a set of points by guillotine cuts may require more cuts than covering them by disjoint line segments.
That is because guillotine cuts are more restrictive than disjoint line segments, in the sense that not all sets of disjoint line segments can be completed to guillotine cuts. 
Nevertheless, we show next that the previous proof can be adapted so that it also shows NP-completeness for guillotine cuts.

\begin{theorem}
    Given a set $S$ of points in the plane and a value $k$, the problem of determining if all points in $S$ can be covered by $k$ guillotine cuts is NP-complete.
\end{theorem}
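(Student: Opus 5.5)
We reuse the point set $S$ from the previous section, built from a planar monotone 3-SAT instance with $n$ variables and $m$ clauses, and take $k=nm$. Membership in NP is as before: given an ordered sequence of $k$ segments (possibly halflines or lines), we check in polynomial time that each is maximal and interior-disjoint with respect to its predecessors, and that together they cover $S$.

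For hardness we prove that $S$ can be covered by $nm$ guillotine cuts if and only if the formula is satisfiable. The direction ``covered by guillotine cuts $\Rightarrow$ satisfiable'' is immediate. Guillotine cuts are interior-disjoint segments, so a covering by $nm$ guillotine cuts is in particular a covering by $nm$ disjoint segments; the lemma of the previous section then gives satisfiability. Touching at endpoints is allowed for guillotine cuts, and this does not affect that argument, because it only uses which lines the segments lie on and that $nm$ of them suffice.

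The work is in the converse. Fix a satisfying assignment and take the $nm$ disjoint segments from the proof of the previous lemma: for each variable, the $m$ segments from its true-colored point, with one per clause extended to that clause point. We must order these segments and lengthen each to a maximal segment, without proper crossings and without adding any new cut. The plan is to exploit the tree-like nesting of the drawing.

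\begin{itemize}
\item First cut: a full horizontal line slightly separating the positive and negative halves is not allowed, since it would be an extra cut. Instead, the first cut is a full line through a segment that reaches a level-$0$ clause point.
\item Order of cuts: process clauses from outermost nesting level inward, for the positive side and, independently, the negative side. When a clause segment is cut, extend it in both directions until it hits previously placed cuts. Extending past the clause point is harmless, because by property (ii) its line contains no unwanted points and coverage only needs the points to lie on it.
\item Within a variable gadget: the $m$ segments of one color all pass through the (absent) variable point, so they are concurrent. Take one as a long cut, and let the others be cuts that start at it near the variable point and extend outward until they hit an enclosing cut.
\end{itemize}

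The main obstacle is ensuring that every extended cut terminates on an earlier cut rather than crossing one, and that all pieces stay inside the faces created so far. I expect this to follow from two facts. The shear in Phase~2 makes every upward ray have negative slope and every downward ray have positive slope, so rays of one color inside a nested clause region are bounded by the enclosing clause's segments. And by the no-crossing property of the drawing, two cuts always meet in an endpoint-on-segment configuration rather than in a proper crossing.

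If a cut would have to pass between two red (or two blue) segments of another gadget, the fix is to adjust the lengths of the filler segments from Figure~\ref{fig:instance-example2-variables}(c) so that each gadget lies in a single face. Because the intersection points sit at the crossings of concurrent pencils, each extension can be ordered so that it ends at the first earlier cut it meets. The polynomial-time construction lemma carries over unchanged, which gives the reduction.
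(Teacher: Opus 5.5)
Your direction ``guillotine cover $\Rightarrow$ satisfiable'' is fine and is the paper's argument. The gap is in the converse. That each extended cut ends on earlier cuts holds by definition. What you actually need is that the maximal extension of a cut never separates the points that a \emph{later} chosen segment must cover, and your two ``facts'' do not give this. The drawing is crossing-free, but the cuts are maximal extensions that leave the drawing: a blue line continues past its variable point down and to the right, into the negative half-plane.

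Your order (start with any level-$0$ clause, then treat the two sides independently) in fact fails. Take $P=(x_2\lor x_3)$ and $N=(\lnot x_1\lor\lnot x_4)$ with $x_1,x_2$ true and $x_4$ false, so $N$ must be covered by the red segment from $r_4\approx(4,0)$. Use the paper's coordinates with shear $(x,y)\mapsto(x-\frac54|y|,y)$. Then $P=(1.25,1)$ and $N=(-1.25,-3)$, and the full line through $b_2\approx(2,0)$ and $P$, a legal first cut under your rule, meets the segment $r_4N$ at $(2.6,-0.8)$. It puts $N$ and the gadget points of $x_4$ on that red line in different faces, so your sequence cannot realize $r_4N$ as a single cut. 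This is robust: below the axis the blue line lies to the right of $x_2$, while $r_4N$ starts to its right and ends to the left of $x_1$. Adjusting filler lengths cannot repair this. The obstruction is a clause segment, and the gadget points are fixed as intersections of lines, not by segment lengths.

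The missing idea is the order the paper uses. Always cut along the supporting line of the \emph{rightmost} remaining chosen segment, and inside each pencil of $m$ concurrent segments go from right to left, extreme ray first. After the shear, every chosen segment leaves its variable point toward the left (up-left or down-left). The paper's claim is that such a cut therefore leaves all remaining points on one side of it. Inductively, all remaining points lie in a single convex face. Every remaining chosen segment joins points of that face, so it lies inside the face, and its maximal extension within the face is a legal guillotine cut covering the same points. This yields exactly $nm$ cuts. In the example, this means cutting along $r_4N$ before touching $P$. So the two sides cannot be processed independently, and outer-to-inner nesting alone does not give a valid order.
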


\begin{proof}
We aim to demonstrate that an optimal segment covering of the instance constructed in the previous section can be achieved using a sequence of guillotine cuts, where the number of cuts is equal to the number of segments, thus proving optimality.

We first focus on a simplified subset of the segments: the original red and blue segments described in the second paragraph of Section~\ref{sec:variables}. While the full construction uses $m$ segments per variable, we temporarily ignore this detail.
Refer to Figure~\ref{fig:instance-example-cuts}(a).

Consider an optimal covering of this simplified instance. We can cover these segments using a sequence of guillotine cuts defined by simply identifying the rightmost remaining segment, and applying a guillotine cut along the line supporting this segment.

By construction, all clause points at the same level share the same $y$-coordinate. When a cut is made along the rightmost segment's supporting line, all remaining points are necessarily left on the same side of the cutting line, ensuring the cut is a valid guillotine cut. 
We then proceed recursively on the sub-region containing all remaining points.
See Figure~\ref{fig:instance-example-cuts}(b).

Next, we account for the full construction, where each original red or blue segment is actually composed of $m$ individual segments. 
Since the supporting lines of the $m$ segments associated to a variable go through the same variable point, the cut identified above (which covered the conceptual segment) can be realized by applying $m$ sequential guillotine cuts in a right-to-left order, each covering one of the $m$ segments.

This cutting process uses exactly as many guillotine cuts as there are segments in the optimal solution.

Since any set of guillotine cuts can be shortened to form a set of disjoint segments, a covering by guillotine cuts must use at least as many cuts as an optimal segment covering. Because our construction achieves the minimum possible number of segments, it yields an optimal guillotine cut covering.

Therefore, the previous NP-completeness reduction (for disjoint segment covering) also applies directly to the problem of covering points by guillotine cuts.
\end{proof}

\begin{figure}
    \centering
    \includegraphics{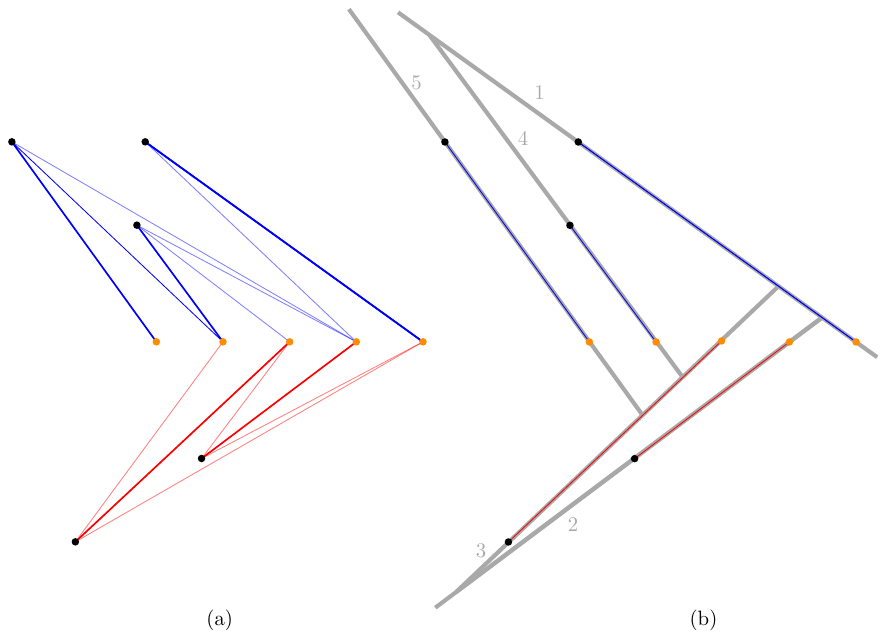}
    \caption{(a) Thicker segments indicate those present in an optimal solution (ignoring that each red/blue segment is actually composed of $m$ segments each).
    (b) Sequence of guillotine cuts associated with the solution (not showing the details of all $m$ cuts for each variable).}
    \label{fig:instance-example-cuts}
\end{figure}

\section{Conclusions}
We have shown that the problem of covering points by guillotine cuts is NP-complete.
In particular, first we revisited the problem of covering points by disjoint line segments, and presented a new hardness proof based on planar monotone 3-SAT.
A new proof seemed necessary as an intermediate step, given that the adaptation of the previous proof from disjoint line segments~\cite{grelier:LIPIcs.SoCG.2022.45} to guillotine cuts did not look straightforward.
Having established the---rather unsurprising---complexity of the problem justifies exploring other more interesting avenues.
In particular, future work will focus on understanding whether existing algorithms for variants of the line cover problem can be applied to the guillotine cut version, and on proving bounds between solution sizes of the different variants.

\bibliographystyle{plain} 

\bibliography{refs}

\end{document}